\documentclass[conference]{IEEEtran}
\IEEEoverridecommandlockouts

\usepackage{cite}
\usepackage{amsmath,amssymb,amsfonts}
\usepackage{graphicx}
\usepackage{textcomp}
\usepackage{amsthm}
\usepackage{algorithm}
\usepackage{algpseudocode}
\usepackage{caption}
\usepackage{url}


\usepackage[
  left=0.65in,
  right=0.65in,
  top=0.75in,
  bottom=1.1in
]{geometry}

\usepackage{graphicx}
\usepackage{textcomp}
\usepackage{xcolor}
\newtheorem{Definition}{Definition}
\newtheorem{Lemma}{Lemma}
\newtheorem{Theorem}{Theorem}

\usepackage{comment}

\def\BibTeX{{\rm B\kern-.05em{\sc i\kern-.025em b}\kern-.08em
    T\kern-.1667em\lower.7ex\hbox{E}\kern-.125emX}}

\begin{document}

\title{Insdel Decoding Algorithms for Maximum Distance Separable Codes}

\author{\IEEEauthorblockN{Bencheng Fan}
\IEEEauthorblockA{\textit{Dept. EE Engineering} \\
\textit{The University of Melbourne}\\
Melbourne, Australia \\
benchengf@student.unimelb.edu.au}

\and
\IEEEauthorblockN{Margreta Kuijper}
\IEEEauthorblockA{\textit{Dept. EE Engineering} \\
\textit{The University of Melbourne}\\
Melbourne, Australia \\
mkuijper@unimelb.edu.au}
\and
\IEEEauthorblockN{Jingge Zhu}
\IEEEauthorblockA{\textit{Dept. EE Engineering} \\
\textit{The University of Melbourne}\\
Melbourne, Australia \\
jingge.zhu@unimelb.edu.au}    

}

\maketitle

\begin{abstract}
 In this paper, we focus on decoding MDS linear block codes for insdel errors. Building on our recent work for two-dimensional  Reed-Solomon codes, we present a decoding algorithm for bounded distance decoding for the general insertion-and-deletion scenario that can be used for any MDS code of rate at most half. Our algorithm employs a sliding window and has a time complexity smaller than brute-force decoding. In particular, for optimal Reed-Solomon codes, it corrects insdel errors up to the half-Singleton bound, which exceeds the scope of existing methods.  
 Additionally, for the special cases of insertion-only and deletion-only insdel error patterns, we present two different algorithms with a time complexity smaller than our general algorithm. 
\end{abstract}

\begin{IEEEkeywords}
Reed-Solomon, Insdel error, DNA storage
\end{IEEEkeywords}

\section{Introduction}
DNA storage is an emerging data storage technology that uses synthetic DNA molecules to encode and store digital information. In a DNA-based storage system, information is encoded into sequences over the DNA alphabet $\{A, G, C, T \}$. Because of its high data density, energy efficiency, and longevity, DNA-based storage is a promising approach for meeting the ever-growing demands of data storage \cite{ieee_dna_storage_2024}\cite{9504568}.  

Despite these advantages, DNA-based storage faces several technological challenges, including high costs, slow read and write speeds, and high error rates in the synthesis and sequencing processes. These errors stem from mutations, degradation, and flawed synthesis and sequencing processes. In particular, DNA molecules are susceptible to insertions and deletions (insdel errors)\cite{heckel2019characterization}\cite{8937735}\cite{shomorony2022information}.

Levenshtein, Varshamov and Tenengolts pioneered the study of codes that correct insdel errors in the 1960s \cite{levenshtein1967asymptotically}\cite{levenshtein1966binary}. In \cite{levenshtein1966binary}, Levenshtein demonstrated the equivalence between insertions and deletions by showing that any code capable of correcting $t$ deletions can correct $t$ insertions and deletions.

 It is demonstrated in \cite{cheng2023efficient} that the number of insdel errors that an $[n,k]$ linear code can correct is at most $n-2k+1$, which is called the half-Singleton bound. Linear codes that reach the half-Singleton bound are said to be optimal. Recent literature has shown that it is possible to construct optimal Reed-Solomon codes. \cite{chen2022improved} \cite{con2023reed} \cite{con2024optimal}

So far, the problem of constructing optimal   $[n,k]$ Reed-Solomon codes, where $\frac{n}{2}\geq k\geq 2$ has been addressed. Moreover, designing efficient decoding algorithms for Reed-Solomon codes against insdel errors has also gained attention in recent years. While most work so far has focused on decoding $[n,2]$ Reed-Solomon codes against insdel errors\cite{singhvi2026optimally}\cite{fanKZisita2026}, there are relatively few works addressing the problem of decoding $[n,k]$ Reed-Solomon codes, where $k\geq 3$. For $[n,k]$ optimal Reed-Solomon codes in general where $k\geq 2$, Banerjee \textit{et al.}\cite{banerjee2025decoding} propose a polynomial-time decoding algorithm that can correct up to $t$ insdel errors as long as $t\cdot k =\mathcal{O}(n)$. We observe that a necessary condition for the method proposed in \cite{banerjee2025decoding} to correct $t$ insdel errors is that $t<(k+1)n-\sqrt{(k^2+2k)n^2+kn}$, which is significantly less than the half-Singleton bound $n-2k+1$. \cite{fanKZisita2026}
\subsection{Our Results}
In this paper, we generalise the decoding methods from our recent paper\cite{fanKZisita2026} from $k=2$ to general values of $k$. More specifically, we generalise the results of\cite{fanKZisita2026} from 
optimal $[n,2]$ Reed-Solomon codes to $[n,k]$ MDS codes (Maximum Distance Separable codes) that are not necessarily optimal. Such codes correct any erasure pattern with $n-k$ erasures. Throughout the paper, we assume that $k\leq\frac{n}{2}$ is a fixed value independent of $n$. For $[n,k]$ MDS codes, our results are threefold. Firstly, we present the main result, which is a novel algorithm correcting insdel errors up to the error-correcting capability of the code with a time complexity of $\mathcal{O}(n^{k+3})$. For the special case of optimal $[n,k]$ Reed-Solomon codes, our method corrects up to the half-Singleton bound, which exceeds the scope of \cite{banerjee2025decoding}. We then propose a decoding algorithm that corrects deletions up to the error-correcting capability of the code with a time complexity of $\mathcal{O}(n^{k+1})$. Lastly, we propose a decoding algorithm that corrects insertions up to the error-correcting capability of the code with a time complexity of $\mathcal{O}(n^{k+1})$. Whereas the two algorithms corresponding to the deletions-only and insertions-only scenarios are straightforward generalisations of the methods from\cite{fanKZisita2026}, the decoding algorithm for MDS codes in general, where both deletions and insertions may happen, employs a new technique, namely the sliding window, and is a less straightforward generalisation of the method from\cite{fanKZisita2026}. 

\subsection{Organization}
The rest of the paper is organised as follows. In Section II, we introduce some key definitions and known results. In Section III, we present the three decoding algorithms for MDS codes and apply the first one to optimal $[n,k]$ Reed-Solomon codes up to the half-Singleton bound. In Section IV, we briefly summarise the results presented in this paper.

\section{Preliminaries}
For a sequence $\mathbf{x}$ with a length of at least $m$ and a vector of indices $\mathbf{J}=(j_1,j_2,\cdots,j_m)$, we denote the subsequence $(x_{j_1},x_{j_2},\cdots,x_{j_m})$ as $\mathbf{x_J}$. Alternatively, we sometimes also denote it as $\mathbf{x}(j_1,j_2,\cdots,j_m)$.  Moreover, if $\mathbf{J}=(1,2,\cdots,m)$, we denote $\mathbf{x_J}$ as $\mathbf{x}(1:m)$. Throughout, we denote $\mathbb{F}$ as the finite field. For a matrix $G\in\mathbb{F}^{k\times n}$, we denote its column vectors as $\mathbf{g}_i$, where $i\in\{1,2,\cdots,n\}$. For a vector of indices $\mathbf{J}$, we denote the $k\times m$ matrix $[\mathbf{g}_{j_1},\mathbf{g}_{j_2},\cdots,\mathbf{g}_{j_m}]$ as $G_J$.

\begin{Definition}
    Let $\mathbf{x}$ be a sequence. If a sequence $\mathbf{y}$ can be obtained by deleting certain symbols from $\mathbf{x}$, then $\mathbf{y}$ is called a subsequence of $\mathbf{x}$. If a sequence $\mathbf{y}$ is a subsequence of both $\mathbf{x_1}$ and $\mathbf{x_2}$, then $\mathbf{y}$ is called a common subsequence of $\mathbf{x_1}$ and $\mathbf{x_2}$.
\end{Definition}

\begin{Definition}
    For two sequences $\mathbf{x},\mathbf{y}$, the insdel distance $d(\mathbf{x},\mathbf{y})$ is defined as the minimum number of insertions and deletions to transform $\mathbf{x}$ into $\mathbf{y}$.
\end{Definition}

Similar to the minimum Hamming distance in traditional error-correction, we define the minimum insdel distance of a code.
\begin{Definition}
    For a code $\mathcal{C}$, its minimum insdel distance $d$ is defined as

\begin{equation}
  d=\min_{\mathbf{c_1},\mathbf{c_2}\in C,\mathbf{c_1}\neq \mathbf{c_2}}\{d(\mathbf{c_1},\mathbf{c_2})\}  
\end{equation}
\end{Definition}
It is demonstrated in \cite{cheng2023efficient} that there exists a tight upper bound on the minimum insdel distance of an $[n,k]$ linear code.
\begin{Theorem}
\label{H-S}
(\cite{cheng2023efficient})
    For an $[n,k]$ code, its minimum insdel distance $d$ must satisfy $d\leq2n-4k+4$. 
\end{Theorem}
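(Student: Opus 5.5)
Since the insdel distance between two length-$n$ words satisfies $d(\mathbf{c_1},\mathbf{c_2})=2\bigl(n-\mathrm{LCS}(\mathbf{c_1},\mathbf{c_2})\bigr)$, where $\mathrm{LCS}$ is the length of a longest common subsequence, the bound $d\leq 2n-4k+4$ is equivalent to showing that the code contains two distinct codewords with a common subsequence of length at least $2k-2$. The plan is to construct such a pair explicitly from linear algebra on the generator matrix $G\in\mathbb{F}^{k\times n}$.

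First I reduce to a fixed alignment pattern. Consider the first $2k-1$ coordinates and look for a nonzero message $\mathbf{u}$, and hence a nonzero codeword $\mathbf{c}=\mathbf{u}G$. I want the two codewords $\mathbf{c_1}=\mathbf{c}$ and $\mathbf{c_2}=\mathbf{c}'$ to satisfy $c_{i+1}=c'_{i}$ for $i=1,\dots,2k-2$, i.e.\ $\mathbf{c}'$ shifted by one position agrees with $\mathbf{c}$ on $2k-2$ positions. That yields the common subsequence $\mathbf{c}(2:2k-1)=\mathbf{c}'(1:2k-2)$ of length $2k-2$.

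Next I write these conditions as a linear system. Set $\mathbf{c}=\mathbf{u}G$ and $\mathbf{c}'=\mathbf{v}G$. The conditions read $\mathbf{u}\mathbf{g}_{i+1}-\mathbf{v}\mathbf{g}_i=0$ for $i=1,\dots,2k-2$. These are $2k-2$ homogeneous linear equations in the $2k$ unknowns $(\mathbf{u},\mathbf{v})$, so the solution space has dimension at least $2$. This slack is what lets me guarantee $\mathbf{u}G\neq\mathbf{v}G$.

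The main obstacle is excluding the degenerate solutions where $\mathbf{u}G=\mathbf{v}G$, i.e.\ $\mathbf{u}=\mathbf{v}$ (taking $G$ of full rank $k$). If $\mathbf{u}=\mathbf{v}$ on the whole solution space, then every solution lies on the diagonal, so I would need to rule that out. Diagonal solutions satisfy $\mathbf{u}(\mathbf{g}_{i+1}-\mathbf{g}_i)=0$ for all $i$. This forces $\mathbf{u}$ to be orthogonal to the span of the differences, which I expect to have dimension $k-1$ generically, and then the diagonal part of the solution space has dimension $\le 1$. So a $2$-dimensional solution space contains a solution with $\mathbf{u}\neq\mathbf{v}$.

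The non-generic case is when the differences span less, so that the diagonal solutions form a space of dimension $\geq 2$. Here some nonzero $\mathbf{u}$ has $\mathbf{u}\mathbf{g}_1=\cdots=\mathbf{u}\mathbf{g}_{2k-1}$, and I would treat it separately. If that common value is zero, then $\mathbf{u}G$ and $0$ share $2k-1$ zeros. If it is a nonzero value $a$, then $\mathbf{u}G$ has $2k-1$ equal symbols $a$; I would pair it either with $0$ or with a scaled codeword $\lambda\mathbf{u}G$, $\lambda\neq1$, after checking whether some other coordinate value gives a long common subsequence. Failing that, I would instead choose a different shift alignment, or different coordinate windows, to obtain $2k-2$ matches.

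Finally, I conclude $\mathrm{LCS}(\mathbf{c_1},\mathbf{c_2})\geq 2k-2$, which gives $d\le 2n-2(2k-2)=2n-4k+4$.
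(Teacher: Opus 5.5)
The paper gives no proof of this statement; it is quoted from \cite{cheng2023efficient}, so your argument has to stand on its own. Its main line is sound. Put $W=\{\mathbf{u}:\mathbf{u}(\mathbf{g}_{i+1}-\mathbf{g}_i)=0,\ i=1,\dots,2k-2\}$. The solution space $S$ of $\mathbf{u}\mathbf{g}_{i+1}=\mathbf{v}\mathbf{g}_i$ has dimension at least $2$, and it meets the diagonal in a copy of $W$. So whenever $\dim W\le 1$ you obtain $\mathbf{u}\neq\mathbf{v}$. Since $G$ has rank $k$, this gives distinct codewords with $\mathbf{c}(2:2k-1)=\mathbf{c}'(1:2k-2)$.

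You should also say that this uses $n\ge 2k-1$. The statement is in fact false for $n\le 2k-2$, because then $2n-4k+4\le 0<2\le d$. It is the paper's standing assumption $k\le n/2$ that makes it meaningful.

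\textbf{The gap is the case $\dim W\ge 2$.} There you only list options, and none of them is carried out. If the common value $a$ is nonzero, then $\mathbf{u}G$ and $\mathbf{0}$ match only at zeros of $\mathbf{u}G$ lying outside the window, and nothing bounds how many there are. The pair $\mathbf{u}G$ and $\lambda\mathbf{u}G$ has no guaranteed matches in the window at all, since they read $a,\dots,a$ and $\lambda a,\dots,\lambda a$ there. Choosing ``a different shift or window'' is not an argument.

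The missing observation is a one-line dimension count. The map $\mathbf{u}\mapsto\mathbf{u}\mathbf{g}_1$ is a single linear functional on $W$, which has dimension at least $2$, so its kernel contains a nonzero $\mathbf{u}$. For this $\mathbf{u}$ you get $\mathbf{u}\mathbf{g}_1=\cdots=\mathbf{u}\mathbf{g}_{2k-1}=0$. Hence $\mathbf{c}=\mathbf{u}G\neq\mathbf{0}$ vanishes on the first $2k-1$ coordinates, and $\mathrm{LCS}(\mathbf{c},\mathbf{0})\ge 2k-1$.

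Equivalently, $(\mathbf{u},\mathbf{0})$ is then an off-diagonal element of $S$. So the shift construction always succeeds, and you never need to split on whether $a$ is zero. With that step inserted, your proof is complete.
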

An $[n,k]$ code with minimum insdel distance $d=2n-4k+4$ is said to be optimal.
In \cite{levenshtein1966binary}, Levenshtein demonstrated that the insdel distance is a metric. He also showed that a code's capability to correct insdel errors is directly decided by its minimum insdel distance.
\begin{Lemma}
\label{Insdel}
(\cite{levenshtein1966binary}) Let $\mathcal{C}$ be a code over $\mathbb{F}$ with minimum insdel distance $d$. The code $\mathcal{C}$ can correct $t$ insdel errors if and only if $2t<d$.
\end{Lemma}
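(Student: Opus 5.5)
The plan is to prove the two directions of Lemma~\ref{Insdel} separately. Throughout, ``correct $t$ insdel errors'' means unique decoding: every received word $\mathbf{y}$ is within insdel distance $t$ of at most one codeword. The basic tool is the triangle inequality for the insdel distance $d(\cdot,\cdot)$. So the first step is to check that $d$ is a metric on finite sequences over $\mathbb{F}$.

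\emph{Metric property.} Nonnegativity holds by definition, and $d(\mathbf{x},\mathbf{y})=0$ exactly when $\mathbf{x}=\mathbf{y}$. Symmetry follows by reversing a transformation: each insertion becomes a deletion and vice versa, so the count is unchanged. The triangle inequality follows by concatenating a minimal transformation $\mathbf{x}\to\mathbf{z}$ with a minimal transformation $\mathbf{z}\to\mathbf{y}$. It is also convenient, though not strictly needed, to note that $d(\mathbf{x},\mathbf{y})=|\mathbf{x}|+|\mathbf{y}|-2\,\mathrm{LCS}(\mathbf{x},\mathbf{y})$. This holds because an optimal transformation may be taken to delete down to a common subsequence and then insert.

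\emph{Sufficiency ($2t<d\Rightarrow$ correctable).} Suppose $\mathbf{y}$ arises from $\mathbf{c}_1$ by at most $t$ insdel errors, and also $d(\mathbf{c}_2,\mathbf{y})\le t$ for some codeword $\mathbf{c}_2\neq\mathbf{c}_1$. The triangle inequality gives $d(\mathbf{c}_1,\mathbf{c}_2)\le 2t<d$, contradicting the definition of the minimum insdel distance. Hence the decoder that returns the unique codeword within distance $t$ of $\mathbf{y}$ is well defined and correct.

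\emph{Necessity (correctable $\Rightarrow 2t<d$).} Suppose $2t\ge d$, and pick codewords $\mathbf{c}_1\neq\mathbf{c}_2$ with $d(\mathbf{c}_1,\mathbf{c}_2)=d$. Take a minimal sequence of $d$ single-symbol operations transforming $\mathbf{c}_1$ into $\mathbf{c}_2$. Let $\mathbf{y}$ be the intermediate sequence after performing the first $\lceil d/2\rceil$ of them. Then $d(\mathbf{c}_1,\mathbf{y})\le\lceil d/2\rceil\le t$ and $d(\mathbf{y},\mathbf{c}_2)\le\lfloor d/2\rfloor\le t$. So $\mathbf{y}$ can be received both from $\mathbf{c}_1$ and from $\mathbf{c}_2$ with at most $t$ errors, and no decoder can distinguish the two cases. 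Thus $t$ errors are not always correctable.

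The main subtlety is the necessity direction. The midpoint argument works because the operations can be performed one at a time, each changing the distance by exactly one. One point needs care: if the error model fixes the exact number of errors at $t$ rather than at most $t$, pad with extra insert/delete pairs. For example, insert and then delete the same symbol so the count reaches exactly $t$. This is possible when $t-\lceil d/2\rceil$ and $t-\lfloor d/2\rfloor$ are handled by adding such pairs. If the parities do not match up, I would instead interpret ``correct $t$ errors'' as ``at most $t$'', which is the standard convention and the one used for bounded distance decoding in this paper.
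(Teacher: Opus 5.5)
Your proof is correct. The paper does not prove this lemma itself; it quotes it from Levenshtein, remarking only that the insdel distance is a metric. Your sufficiency direction is exactly the triangle-inequality step the paper uses in the proofs of Theorems~\ref{Theo: Brute} and~\ref{Theo: General}. Your necessity direction, which stops partway along a shortest edit path between two codewords at distance $d$, is the standard converse.

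The parity hedge in your last paragraph is unnecessary. All codewords have length $n$, so $d=2n-2\,\mathrm{LCS}(\mathbf{c}_1,\mathbf{c}_2)$ is even. You can therefore stop after $j$ steps with $\max(0,d-t)\le j\le\min(d,t)$ and $j\equiv t \pmod 2$. Such a $j$ exists: take $j=t$ if $2t=d$; otherwise $2t\ge d+2$, and the range contains at least three integers. Then $t-j$ and $t-(d-j)$ are both even and nonnegative, so insert--delete pairs pad both edit paths to exactly $t$ operations. In any case, the ``at most $t$'' reading you settle on is the one the paper uses.
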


Calculating the minimum insdel distance is essential for assessing a code's ability to correct insdel errors. It is well-known (see e.g.\cite[Lemma 1]{singhvi2026optimally}) that the insdel distance between two sequences can be calculated via their longest common
subsequence. We recall this result in the following lemma.
\begin{Lemma}
\label{LCS}
     Let $\mathbf{a}\in\mathbb{F}^{n_1}$ and $\mathbf{b}\in\mathbb{F}^{n_2}$, then we have
\begin{equation}
    d(\mathbf{a},\mathbf{b})=n_1+n_2-2LCS(\mathbf{a},\mathbf{b}),
\end{equation}
where $LCS(\mathbf{a},\mathbf{b})$ denotes the length of the longest common subsequence of $\mathbf{a}$ and $\mathbf{b}$. 
\end{Lemma}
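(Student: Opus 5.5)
We prove the identity $d(\mathbf{a},\mathbf{b})=n_1+n_2-2LCS(\mathbf{a},\mathbf{b})$ by establishing the two inequalities $\le$ and $\ge$ separately. Throughout, let $\ell=LCS(\mathbf{a},\mathbf{b})$.

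For the upper bound, let $\mathbf{s}$ be a common subsequence of length $\ell$. Starting from $\mathbf{a}$, delete the $n_1-\ell$ symbols not used in an embedding of $\mathbf{s}$ into $\mathbf{a}$; this yields $\mathbf{s}$. Then insert the $n_2-\ell$ symbols of $\mathbf{b}$ not used in an embedding of $\mathbf{s}$ into $\mathbf{b}$, each at its correct position; this yields $\mathbf{b}$. The total number of operations is $n_1+n_2-2\ell$, so $d(\mathbf{a},\mathbf{b})\le n_1+n_2-2\ell$.

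For the lower bound, take an optimal sequence of $p$ insertions and $q$ deletions transforming $\mathbf{a}$ into $\mathbf{b}$, so $d(\mathbf{a},\mathbf{b})=p+q$. Counting lengths gives $n_2=n_1+p-q$. The key step is to show that $\mathbf{a}$ and $\mathbf{b}$ have a common subsequence of length at least $n_1-q$. To see this, track the symbols of $\mathbf{a}$ that are never deleted. Each operation preserves the relative order of the symbols it does not touch, so these $n_1-q$ surviving original symbols appear in $\mathbf{b}$ in the same order as in $\mathbf{a}$.

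One subtlety arises: a deletion might remove a symbol that was itself inserted earlier. We handle this by labelling every symbol with its origin (original or inserted), so each deletion removes a specific labelled symbol. The number of original symbols deleted is then at most $q$, which is all the bound requires. Hence $\ell\ge n_1-q$, and we obtain
\[
n_1+n_2-2\ell \le n_1+(n_1+p-q)-2(n_1-q)=p+q=d(\mathbf{a},\mathbf{b}).
\]

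Combining the two inequalities proves the lemma. The only delicate point is the bookkeeping in the lower bound: making "surviving original symbols" precise through the labelling. Once that is set up, the length identity $n_2=n_1+p-q$ finishes the argument immediately.
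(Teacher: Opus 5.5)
The paper does not prove this lemma. It quotes it as a well-known fact, citing \cite[Lemma 1]{singhvi2026optimally}, and only uses it as a tool, for instance in the deletion-only analysis and to compute $d(\mathbf{c},\mathbf{y})$ with an LCS routine.

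Your argument is a correct, self-contained proof of this standard identity. The upper bound is right: delete down to a longest common subsequence, then insert back up to $\mathbf{b}$. The lower bound is also sound. Labelling tokens by origin makes ``surviving original symbols'' well defined. At most $q$ deletions can hit original tokens, so at least $n_1-q$ of them survive in their original relative order and form a common subsequence. The length count $n_2=n_1+p-q$ then gives $n_1+n_2-2\,LCS(\mathbf{a},\mathbf{b})\le p+q$.

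Compared with the paper's citation, your route makes the paper self-contained. It also shows that the identity uses nothing about $\mathbb{F}$ and holds over any alphabet.

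Two cosmetic remarks. The phrase ``these $n_1-q$ surviving original symbols'' should say ``at least $n_1-q$''; you do correct this in the next paragraph. Also, the lower-bound argument does not need the edit sequence to be optimal. It shows $p+q\ge n_1+n_2-2\,LCS(\mathbf{a},\mathbf{b})$ for every sequence of $p$ insertions and $q$ deletions, and you can take the minimum afterwards.
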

Lemma \ref{LCS} translates the problem of computing the insdel distance between two sequences to determining the length of their longest common subsequence. In \cite{hirschberg1977algorithms}, an algorithm that solved the longest common subsequence problem in $\mathcal{O}(n^2)$ time was presented. 

As a benchmark for comparison, we first present the brute-force method for decoding $[n,k]$ MDS codes against up to $\lfloor\frac{d-1}{2}\rfloor$ insdel errors and then give an analysis of its time complexity.

\begin{algorithm}[!b]
\caption{Brute-Force Decoding Method}
\label{Brute}

\begin{algorithmic}[1]
\Require Received sequence $\mathbf{y}$ of length $\ell$, generator matrix $G\in\mathbb{F}^{k\times n}$,  minimum insdel distance $d$
\Ensure Decoded codeword $\mathbf{c}_o$ or \textsc{false}

\If{$\ell \leq n-\frac{d}{2}$ or $\ell \geq n+\frac{d}{2}$}
    \State \Return \textsc{false}
\EndIf

\For{all $(i_1,\cdots,i_k)$ such that $1 \leq i_1 <\cdots<i_k \leq l$}
    \State $\mathbf{I} \gets (i_1,\cdots,i_k)$
    \For{all $(j_1,\cdots,j_k)$ such that $1 \leq j_1<\cdots<j_k \leq n$}
        \State $\mathbf{J} \gets (j_1,\cdots,j_k)$
        \State Solve $\mathbf{u}\cdot G_{\mathbf{J}}=\mathbf{y}_{\mathbf{I}}$
        \State $\mathbf{c} \gets \mathbf{u}\cdot G$

        \If{$d(\mathbf{c},\mathbf{y})< \frac{d}{2}$}
            \State $\mathbf{c}_o \gets \mathbf{c}$
            \State \Return $\mathbf{c}_o$
        \EndIf
    \EndFor
    
\EndFor

\State \Return \textsc{false}
\end{algorithmic}
\end{algorithm}

\begin{Theorem}
\label{Theo: Brute}
    Let $\mathcal{C}$ be an $[n,k]$ MDS code with minimum insdel distance $d$. Algorithm \ref{Brute} correctly decodes up to $\lfloor \frac{d-1}{2}\rfloor$ insdel errors in  $\mathcal{O}(n^{2k+2})$ time.  
\end{Theorem}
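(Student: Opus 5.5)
We split the argument into correctness and running time. For correctness, let $\mathbf{c}^*$ be the transmitted codeword and $\mathbf{y}$ the received sequence of length $\ell$, obtained from $\mathbf{c}^*$ by at most $t\le\lfloor\frac{d-1}{2}\rfloor$ insdel errors, so $d(\mathbf{c}^*,\mathbf{y})\le t<\frac{d}{2}$.

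\textbf{Length test.} Since each insertion or deletion changes the length by one, $|\ell-n|\le t<\frac d2$. Hence the early return \textsc{false} in lines 1--3 is never triggered for a correctable input.

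\textbf{Locating a surviving $k$-subsequence.} By Lemma \ref{LCS}, $LCS(\mathbf{c}^*,\mathbf{y})=\frac{n+\ell-d(\mathbf{c}^*,\mathbf{y})}{2}$. The number of symbols of $\mathbf{c}^*$ deleted along a common subsequence is at most $t$, so $LCS\ge n-t$. We need $n-t\ge k$, i.e.\ $t\le n-k$. This holds because Theorem \ref{H-S} gives $d\le 2n-4k+4$, hence $t\le n-2k+1\le n-k$ for $k\ge1$. We may therefore pick $k$ positions $j_1<\dots<j_k$ in $\mathbf{c}^*$ and $i_1<\dots<i_k$ in $\mathbf{y}$ with $\mathbf{y}_{\mathbf{I}}=\mathbf{c}^*_{\mathbf{J}}$, taken from a longest common subsequence. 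Writing $\mathbf{c}^*=\mathbf{u}^*G$, we get $\mathbf{u}^*G_{\mathbf{J}}=\mathbf{y}_{\mathbf{I}}$. For an MDS code every $k$ columns of $G$ are linearly independent (any $n-k$ erasures are correctable), so $G_{\mathbf{J}}$ is invertible and line 7 recovers exactly $\mathbf{u}^*$ for this pair. The double loop enumerates all such pairs, so the algorithm reaches this pair unless it has already returned.

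\textbf{Uniqueness of the output.} Whenever the algorithm returns some $\mathbf{c}$, we have $d(\mathbf{c},\mathbf{y})<\frac d2$. The triangle inequality, valid because the insdel distance is a metric, gives $d(\mathbf{c},\mathbf{c}^*)<d$, which forces $\mathbf{c}=\mathbf{c}^*$ by definition of the minimum insdel distance. So the output is always $\mathbf{c}^*$, and by the previous step some pair is guaranteed to produce it. The main subtlety is justifying that $G_{\mathbf{J}}$ is invertible for every $\mathbf{J}$; this is the MDS property. For non-MDS choices of $\mathbf{J}$ the system could be singular, but this is irrelevant here.

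\textbf{Complexity.} Since $\ell<n+\frac d2\le 2n$, the number of pairs $(\mathbf{I},\mathbf{J})$ is $\binom{\ell}{k}\binom{n}{k}=\mathcal{O}(n^{2k})$. For each pair:
\begin{itemize}
\item solving the $k\times k$ system costs $\mathcal{O}(k^3)=\mathcal{O}(1)$, since $k$ is fixed;
\item forming $\mathbf{u}G$ costs $\mathcal{O}(kn)=\mathcal{O}(n)$;
\item computing $d(\mathbf{c},\mathbf{y})$ via Lemma \ref{LCS} with the $\mathcal{O}(n^2)$ LCS algorithm of \cite{hirschberg1977algorithms} costs $\mathcal{O}(n\ell)=\mathcal{O}(n^2)$.
\end{itemize}
Multiplying the $\mathcal{O}(n^{2k})$ pairs by the $\mathcal{O}(n^2)$ cost per pair gives $\mathcal{O}(n^{2k+2})$ total.
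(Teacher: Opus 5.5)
Your proof is correct and follows essentially the same route as the paper: the half-Singleton bound guarantees at least $k$ surviving symbols, the MDS property makes $G_{\mathbf{J}}$ invertible, the triangle inequality gives uniqueness, and the complexity count is $\mathcal{O}(n^{2k})\cdot\mathcal{O}(n^2)$. You are slightly more explicit than the paper in three places: checking that the length test never rejects a correctable input, checking that the correct pair $(\mathbf{I},\mathbf{J})$ actually passes the distance test, and accounting for the $\mathcal{O}(k^3)$ cost of the linear solve.
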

\begin{proof}
    Let $\mathbf{c_o}$ be the true codeword. As we assume that the number of insdel errors is at most $\lfloor \frac{d-1}{2}\rfloor$, we have that $n-\frac{d}{2}<\ell < n+\frac{d}{2}$. Furthermore, since the number of deletions is smaller than $\frac{d}{2}$, by Theorem \ref{H-S}, there are at least $n-(n-2k+1)=2k-1\geq k$ symbols from $\mathbf{c_o}$ in $\mathbf{y}$. Let $\mathbf{I}=(i_1,\cdots,i_k)$ and $\mathbf{J}=(j_1\cdots,j_k)$.
    Because the code $\mathcal{C}$ is MDS, the matrix $\mathbf{G_J}$ is full rank, and there exists a unique solution to the linear equation $\mathbf{u}\cdot\mathbf{G_J}=\mathbf{y_I}$, which we denote as $\mathbf{u}$. Suppose that $\mathbf{c}=\mathbf{u}\cdot G$ satisfies $d(\mathbf{c},\mathbf{y})< \frac{d}{2}$. We now show by contradiction that $\mathbf{c}$ must be equal to $\mathbf{c_o}$. Suppose that $\mathbf{c}\neq \mathbf{c_o}$. 
    By assumption, $d(\mathbf{c_o},\mathbf{y})< \frac{d}{2}$, and this implies that $d(\mathbf{c,\mathbf{c_o}})\leq d(\mathbf{c},\mathbf{y})+d(\mathbf{c_o},\mathbf{y})<d$, which contradicts the fact that $d$ is the minimum insdel distance of the code $\mathcal{C}$. It follows that $\mathbf{c}= \mathbf{c_o}$. If $d(\mathbf{c},\mathbf{y})\geq\frac{d}{2}$, the algorithm moves on to the next combination of $(\mathbf{I},\mathbf{J})$. If no codeword satisfying $d(\mathbf{c},\mathbf{y})< \frac{d}{2}$ is found for all combinations of $(\mathbf{I},\mathbf{J})$, then it indicates that the number of insdel errors exceeds the insdel error-correction capability of the code. The algorithm returns FALSE. The time complexity of
    Algorithm \ref{Brute} is calculated as follows.  For each combination of $(\mathbf{I},\mathbf{J})$, the encoding process in step 9 takes $\mathcal{O}(n)$ time. It takes further $\mathcal{O}(n^2)$ time to calculate $d(\mathbf{c},\mathbf{y})$ \cite{hirschberg1977algorithms}. Since there are $\binom{\ell}{k}$ and $\binom{n}{k}$ combinations for $\mathbf{I}$ and $\mathbf{J}$, respectively, the time complexity of the brute-force method can be calculated as $\big(\mathcal{O}(n)+\mathcal{O}(n^2)\big)\cdot \mathcal{O}(n^k)\cdot \mathcal{O}(n^k)=\mathcal{O}(n^{2k+2})$. 
\end{proof}

\section{Main Results}

In this section, we first present an insdel decoding algorithm for $[n,k]$ MDS codes. Then, we derive decoding algorithms for optimal $[n,k]$ Reed-Solomon codes for decoding insdel errors up to the half-Singleton bound, \textit{i.e.}, the number of insdel errors must be less than or equal to $n-2k+1$. We denote the number of insertions and deletions as $r$ and $s$, respectively.  Note that, upon receiving the sequence $\mathbf{y}$, we do not know the values of $r$ and $s$, but we do know the length $\ell$ of $\mathbf{y}$. 
\subsection{Decoding MDS Codes Against Insdel Errors}
We now present the main result of the paper, which is an algorithm for decoding $[n,k]$ MDS codes for up to $\lfloor\frac{d-1}{2}\rfloor$ insdel errors, allowing for both deletions and insertions. Our algorithm generalises the method for $k=2$ presented in \cite{fanKZisita2026} to MDS codes with $k\geq 2$ by introducing a sliding window decoder.

\begin{algorithm}[!b]
\caption{Algorithm for Decoding MDS Codes Against  Insdel Errors}
\label{MDS}
\begin{algorithmic}[1]
\Require Received sequence $\mathbf{y}$ of length $\ell$, generator matrix $G\in\mathbb{F}^{k\times n}$,  minimum insdel distance $d$
\Ensure Decoded codeword $\mathbf{c}_o$ or \textsc{false}

\If{$\ell \leq n-\frac{d}{2}$ or $\ell \geq n+\frac{d}{2}$}
    \State \Return \textsc{false}
\EndIf

\For{$t\in\{1,2,\cdots,\ell-2k+3\}$}
    \State $\mathbf{T}\gets(t,t+1,\cdots,t+2k-3)$
    \For{$\{i_1,i_2,\cdots,i_k\}\subseteq T$ and $i_1<i_2<\cdots<i_k$}
        \State $\mathbf{I}\gets(i_1,i_2,\cdots,i_k)$
        \For{$1\leq j_1<j_2<\cdots<j_k\leq n$}
            \State $\mathbf{J}\gets(j_1,j_2,\cdots,j_k)$
            \State Solve $\mathbf{u}\cdot G_{\mathbf{J}}=\mathbf{y}_{\mathbf{I}}$
            \State $\mathbf{c}\gets\mathbf{u}\cdot G$
            \If{$d(\mathbf{c},\mathbf{y})< \frac{d}{2}$}
                \State $\mathbf{c_o}\gets\mathbf{c}$
                \State \Return $\mathbf{c_o}$
            \EndIf
        \EndFor
    \EndFor
\EndFor

\State \Return \textsc{false}
\end{algorithmic}
\end{algorithm}

To facilitate the proof of Algorithm \ref{MDS}'s correctness, we introduce the following Lemmas.

\begin{Lemma}\label{Seq_X}
\label{series}
     Consider a binary sequence whose terms satisfy
\[
x_n \in \{0,1\}, \qquad n \geq 1.
\] For $p\geq 0$, define its corresponding sum sequence $S_p$ as 
     $$S_p=\begin{cases}
    \sum_{n=1}^px_n & \text{if } p > 0, \\
    0   & \text{if } p = 0.
\end{cases}$$ Let $m$ be a positive integer. If:
     $$ S_{n+2m}-S_{n}\leq m\;\;\;\;\;\;\text{for all } n\geq 0,$$
     then for all $p\geq 0$:
$$S_p\leq \frac{1}{2}(p+m).$$

\end{Lemma}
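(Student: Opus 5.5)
We argue by chopping the prefix $x_1,\dots,x_p$ into consecutive windows of length $2m$ and bounding the short leftover piece trivially. Fix $p\geq 0$ and write $p=2mq+r$ with integers $q\geq 0$ and $0\leq r<2m$.

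\emph{Full blocks.} Telescoping gives
\[
S_{2mq}=\sum_{i=0}^{q-1}\big(S_{2m(i+1)}-S_{2mi}\big).
\]
Applying the hypothesis with $n=2mi$ to each term yields $S_{2mq}\leq qm$.

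\emph{Remainder.} The last $r$ terms contribute $S_p-S_{2mq}$. Since each $x_n\in\{0,1\}$, we have $S_p-S_{2mq}\leq r$. Moreover $S$ is nondecreasing and $r<2m$, so applying the hypothesis at $n=2mq$ gives $S_p-S_{2mq}\leq S_{2mq+2m}-S_{2mq}\leq m$. Hence $S_p-S_{2mq}\leq\min(r,m)$.

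\emph{Combining.} We get $S_p\leq qm+\min(r,m)$, and it remains to show this is at most $\tfrac12(p+m)=qm+\tfrac12(r+m)$.
\begin{itemize}
\item If $r\leq m$, then $\min(r,m)=r=\tfrac12(r+r)\leq\tfrac12(r+m)$.
\item If $r>m$, then $\min(r,m)=m=\tfrac12(m+m)<\tfrac12(r+m)$.
\end{itemize}
In both cases $\min(r,m)\leq\tfrac12(r+m)$, so $S_p\leq\tfrac12(p+m)$. The case $p=0$ is covered, since then $q=r=0$ and $S_0=0\leq m/2$.

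The only point needing care is the remainder. Using $\min(r,m)$ rather than just $r$ or just $m$ is essential: the bound by $r$ alone fails when $r>m$, and the bound by $m$ alone fails when $r<m$. Beyond that, the argument is a routine block decomposition.
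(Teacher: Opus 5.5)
Your proof is correct and is essentially the paper's argument in unrolled form: the paper's strong induction step $S_{p}\leq S_{p-2m}+m$ peels off full windows of length $2m$ just as your telescoping does (leaving the partial block at the start rather than at the end), and its two base cases $0\leq p\leq m$ and $m<p\leq 2m$ are exactly your two cases for $\min(r,m)\leq\frac{1}{2}(r+m)$. Your direct block decomposition avoids the induction bookkeeping but is otherwise the same.
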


\begin{proof}
    First, we examine the scenarios where $0\leq p\leq m$. As $p \leq m$, we have that $S_p \leq p=\frac{1}{2}(p+p)\leq \frac{1}{2}(p+m)$. Second, assume that $m+1\leq p\leq 2m$, we have $S_p\leq S_{2m}=S_{0+2m}-S_0\leq m=\frac{1}{2}(m+m)<\frac{1}{2}(p +m)$.\\
    \indent We now use induction to prove that $S_p\leq\frac{1}{2}(p+m)$ holds for all $p>2m$. Assume that there exists a positive integer $q\geq 2m$ such that $S_i\leq \frac{1}{2}(i+m)$ for all $0\leq i \leq q$. Since $S_{q+1}-S_{q+1-2m}\leq m$, we have that $S_{q+1}\leq S_{q+1-2m}+ m\leq \frac{1}{2}(q+1-2m+m)+m=\frac{1}{2}(q+1+m)$. In other words, $S_i\leq \frac{1}{2}(i+m)$ for all $0\leq i \leq q+1$. As is previously shown, $S_i\leq \frac{1}{2}(i+m)$ for all $0\leq i \leq 2m$. By induction, we conclude that $S_p\leq \frac{1}{2}(p+m),\text{for all } p\geq 0$.
\end{proof}

Based on Lemma \ref{Seq_X}, we derive the following Lemma, which is the foundation for the key idea of Algorithm \ref{MDS}.

\begin{Lemma}
\label{adjacent}
    Let $\mathbf{c}\in \mathbb{F}^n$ be a sequence sent by the transmitter. Let $\mathbf{y}$ be the received sequence of length $\ell$ obtained by deleting $s$ symbols from and inserting $r$ symbols into $\mathbf{c}$, where $r+s< \frac{d}{2}$. Then, there exists  $1\leq i\leq \ell-2k+3$ such that the subsequence $\mathbf{y}(i,i+1,\cdots,i+2k-3)$ contains at least $k$ symbols from $\mathbf{c}$. 
\end{Lemma}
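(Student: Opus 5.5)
We argue by contradiction and apply Lemma \ref{Seq_X} to the indicator sequence of the symbols of $\mathbf{y}$ that come from $\mathbf{c}$. Fix one way of obtaining $\mathbf{y}$ from $\mathbf{c}$ by $s$ deletions and $r$ insertions. For $1\leq p\leq \ell$ set $x_p=1$ if $y_p$ is one of the $n-s$ surviving symbols of $\mathbf{c}$, and $x_p=0$ if $y_p$ is an inserted symbol. Set $x_p=0$ for $p>\ell$. With $S_p$ as in Lemma \ref{Seq_X}, we have $S_\ell=n-s=\ell-r$. Put $m=k-1\geq 1$, which uses $k\geq 2$. A window of length $2k-2=2m$ starting at position $i$ then contains $S_{i-1+2m}-S_{i-1}$ symbols from $\mathbf{c}$.

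Suppose the statement fails. Then every window $\mathbf{y}(i,\dots,i+2k-3)$ with $1\leq i\leq \ell-2k+3$ contains at most $k-1=m$ symbols from $\mathbf{c}$. This gives $S_{n'+2m}-S_{n'}\leq m$ for $0\leq n'\leq \ell-2m$.

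Next we extend this bound to all $n'\geq 0$, as Lemma \ref{Seq_X} requires. For $n'>\ell-2m$, the window $\{n'+1,\dots,n'+2m\}$ meets $\{1,\dots,\ell\}$ in positions all lying in the last full window $\{\ell-2m+1,\dots,\ell\}$, and the $x_p$ beyond $\ell$ are zero. So the count is at most $m$ there as well. This step requires $\ell\geq 2m=2k-2$. That holds because $\ell>n-\frac{d}{2}$, and Theorem \ref{H-S} gives $\frac{d}{2}\leq n-2k+2$, so $\ell>2k-2$.

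Lemma \ref{Seq_X} now gives $S_\ell\leq \frac12(\ell+k-1)$, that is, $\ell-r\leq \frac12(\ell+k-1)$, or $\ell\leq 2r+k-1$. Substituting $\ell=n-s+r$ yields $n-(r+s)\leq k-1$. On the other hand, $r+s<\frac{d}{2}\leq n-2k+2$ by Theorem \ref{H-S}, so $n-(r+s)>2k-2\geq k-1$. This is a contradiction, so some window contains at least $k$ symbols from $\mathbf{c}$.

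The main obstacle is purely bookkeeping. We must match the index shift between window starts $i$ and the parameter $n'=i-1$ in Lemma \ref{Seq_X}. We must also handle the windows that run past position $\ell$, which is where the bound $\ell\geq 2k-2$ is needed. Once the zero-padding is set up, the rest is a direct substitution.
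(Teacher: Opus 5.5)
Your proof is correct and follows the same route as the paper: argue by contradiction, take the indicator sequence of surviving symbols, apply Lemma~\ref{Seq_X} with $m=k-1$, and derive a clash between $S_\ell=n-s$ and $S_\ell\leq\frac{1}{2}(\ell+k-1)$ using $r+s<\frac{d}{2}\leq n-2k+2$. Your zero-padding beyond position $\ell$, together with the check $\ell>2k-2$, makes explicit that the ``for all $n\geq 0$'' hypothesis of Lemma~\ref{Seq_X} holds, which the paper's proof assumes without checking.
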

\begin{proof}
     Let $p$ be a positive integer. We denote the number of symbols from $\mathbf{c}$ in $\mathbf{y}(1:p)$ as $S_p$. Our proof is by contradiction as follows: assume that such an $i$ does not exist; then for all $i\in\{0,\cdots,\ell-2k+2\}$, we have that $S_{i+2k-2}-S_i\leq k-1$. Then, by  Lemma \ref{series}, it follows that 
     \begin{equation}
        S_\ell\leq\frac{1}{2}(\ell+k-1).   
     \end{equation}
      On the other hand,  $r+s< \frac{d}{2}\leq n-2k+2< n-k+1$, which indicates that $n-s>r+k-1$. Therefore, we have that $S_\ell=n-s>\frac{1}{2}(n-s+r+k-1)=\frac{1}{2}(\ell+k-1)$. This contradicts (3) and therefore proves the Lemma.  
\end{proof}
 
We now provide a proof for Algorithm \ref{MDS} and give an analysis of its time complexity.  
\begin{Theorem}
\label{Theo: General}
     Let $\mathcal{C}$ be an $[n,k]$ MDS code with minimum insdel distance $d$. Algorithm \ref{MDS} correctly decodes up to $\lfloor \frac{d-1}{2}\rfloor$ insdel errors in  $\mathcal{O}(n^{k+3})$ time. 
\end{Theorem}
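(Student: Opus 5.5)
The proof splits into correctness and complexity, and the correctness part closely follows the proof of Theorem \ref{Theo: Brute}.

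\textbf{Correctness.} Let $\mathbf{c_o}$ be the transmitted codeword and suppose at most $\lfloor\frac{d-1}{2}\rfloor$ insdel errors occurred, so $r+s<\frac{d}{2}$.

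First, the length test in step 1 never rejects a valid input. Since $|\ell-n|\leq r+s<\frac{d}{2}$, we have $n-\frac{d}{2}<\ell<n+\frac{d}{2}$.

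Next, I invoke Lemma \ref{adjacent}. It gives an index $t\in\{1,\dots,\ell-2k+3\}$ such that the window $\mathbf{y}(t,\dots,t+2k-3)$ contains at least $k$ symbols that originate from $\mathbf{c_o}$. Choose $k$ of these positions $i_1<\dots<i_k$ inside the window, and let $j_1<\dots<j_k$ be their original positions in $\mathbf{c_o}$. Surviving symbols keep their relative order under insertions and deletions, so the $j$'s are indeed increasing. Hence the loops of Algorithm \ref{MDS} eventually reach the pair $(\mathbf{I},\mathbf{J})$ with $\mathbf{y_I}=(\mathbf{c_o})_{\mathbf{J}}$.

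Because $\mathcal{C}$ is MDS, $G_{\mathbf{J}}$ is invertible. The equation $\mathbf{u}G_{\mathbf{J}}=\mathbf{y_I}$ therefore has a unique solution, and that solution is the message of $\mathbf{c_o}$. The algorithm thus produces $\mathbf{c}=\mathbf{c_o}$, which satisfies $d(\mathbf{c},\mathbf{y})<\frac{d}{2}$, so the algorithm returns something no later than this pair.

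It remains to check that whatever is returned is correct. If the algorithm returns earlier with some $\mathbf{c}$ satisfying $d(\mathbf{c},\mathbf{y})<\frac{d}{2}$, the triangle inequality gives $d(\mathbf{c},\mathbf{c_o})<d$. This forces $\mathbf{c}=\mathbf{c_o}$, exactly as in the proof of Theorem \ref{Theo: Brute}. Conversely, if no candidate passes the test, the number of errors must exceed the capability, and returning \textsc{false} is correct.

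\textbf{Complexity.}
\begin{itemize}
\item The outer loop runs over at most $\ell-2k+3=\mathcal{O}(n)$ windows, using $\ell<n+\frac{d}{2}=\mathcal{O}(n)$.
\item Each window has $\binom{2k-2}{k}$ choices of $\mathbf{I}$. This is a constant, since $k$ is fixed.
\item There are $\binom{n}{k}=\mathcal{O}(n^k)$ choices of $\mathbf{J}$.
\item For each triple, solving the $k\times k$ system costs $\mathcal{O}(1)$ and encoding costs $\mathcal{O}(n)$.
\item Computing $d(\mathbf{c},\mathbf{y})$ via Lemma \ref{LCS} and the LCS algorithm of \cite{hirschberg1977algorithms} costs $\mathcal{O}(n^2)$.
\end{itemize}
The total is $\mathcal{O}(n)\cdot\mathcal{O}(1)\cdot\mathcal{O}(n^k)\cdot\mathcal{O}(n^2)=\mathcal{O}(n^{k+3})$.

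All the combinatorial work is done by Lemma \ref{adjacent}. The remaining points that need care are routine but must be stated: the matched positions in $\mathbf{c_o}$ appear in increasing order, so $\mathbf{J}$ is among the enumerated index vectors, and $\ell$ is $\mathcal{O}(n)$, so the number of windows is linear.
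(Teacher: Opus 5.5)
Your proposal is correct and takes essentially the same route as the paper's proof. In both, Lemma~\ref{adjacent} supplies a window of length $2k-2$ containing $k$ symbols of $\mathbf{c_o}$, the MDS property and the triangle inequality force any accepted candidate to equal $\mathbf{c_o}$, and the count $\mathcal{O}(n)\cdot\mathcal{O}(1)\cdot\mathcal{O}(n^k)\cdot\mathcal{O}(n^2)$ gives the complexity. You are in fact more explicit than the paper on three points it leaves implicit: the length test in step 1 never rejects a valid input, the true pair $(\mathbf{I},\mathbf{J})$ with increasing $\mathbf{J}$ is actually enumerated, and $\ell=\mathcal{O}(n)$.
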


\begin{proof}
    Let $\mathbf{c_o}$ be the true codeword. 
    By Lemma \ref{adjacent}, there exists $1\leq t\leq \ell-2k+3$ such that the subsequence $\mathbf{y}(t,t+1,\cdots,t+2k-3)$ contains at least $k$ symbols from $\mathbf{c_o}$. For each possible value of $t$, denote $T=\{t,t+1,\cdots,t+2k-3\}$. Let $\mathbf{I}=(i_1,i_2,\cdots,i_k)$, where $\{i_1,i_2,\cdots,i_k\}\subseteq T$ and $i_1<i_2<\cdots<i_k$. Let $\mathbf{J}=(j_1,j_2,\cdots,j_k)$, where $1\leq j_1<j_2<\cdots<j_k\leq n$. Denote the unique solution of the linear equation $\mathbf{u}\cdot\mathbf{G_J}=\mathbf{y_I}$ as $\mathbf{u}$. Suppose that $\mathbf{c}=\mathbf{u}\cdot G$ satisfies $d(\mathbf{c},\mathbf{y})< \frac{d}{2}$. We now show by contradiction that $\mathbf{c}$ must be equal to $\mathbf{c_o}$. Suppose that $\mathbf{c}\neq \mathbf{c_o}$. 
    By assumption, $d(\mathbf{c_o},\mathbf{y})< \frac{d}{2}$, and this implies that $d(\mathbf{c,\mathbf{c_o}})\leq d(\mathbf{c},\mathbf{y})+d(\mathbf{c_o},\mathbf{y})<d$, which contradicts the fact that $d$ is the minimum insdel distance of the code $\mathcal{C}$. It follows that $\mathbf{c}= \mathbf{c_o}$. If $d(\mathbf{c},\mathbf{y})\geq \frac{d}{2}$, the algorithm moves on to the next combination of $(t,\mathbf{I},\mathbf{J})$. If no codeword satisfying $d(\mathbf{c},\mathbf{y})< \frac{d}{2}$ is found for all combinations of $(t,\mathbf{I},\mathbf{J})$, then it indicates that the number of insdel errors exceeds the error correction capability of the code. The algorithm returns FALSE. 
    The time complexity of Algorithm \ref{MDS} is calculated as follows. As $t$ takes values from the set $t\in\{1,2,\cdots,\ell-2k+3\}$, a time complexity of $\mathcal{O}(n)$ is introduced.
    For each value of $t$, there are exactly $\binom{2k-2}{k}$ possible combinations for $\mathbf{I}$ and $\binom{n}{k}$ combinations for $\mathbf{J}$. As we assume that $k$ is a fixed integer, the $\binom{2k-2}{k}$ combinations for $\mathbf{I}$ introduce a time complexity of $\mathcal{O}(1)$. The $\binom{n}{k}$ combinations for $\mathbf{J}$, on the other hand, introduce a time complexity of $\mathcal{O}(n^k)$. For each combination of $(t,\mathbf{I},\mathbf{J})$, it takes $\mathcal{O}(n)$ time for the encoding process in step 11. It takes further $\mathcal{O}(n^2)$ time to calculate $d(\mathbf{c},\mathbf{y})$\cite{hirschberg1977algorithms}.
    Therefore, the total time complexity is $ \mathcal{O}(n)\cdot\mathcal{O}(1)\cdot\mathcal{O}(n^k)\cdot\big(\mathcal{O}(n)+\mathcal{O}(n^2)\big)=\mathcal{O}(n^{k+3})$. 
\end{proof}

To establish connections with the recent literature on optimal Reed-Solomon codes, we now present Algorithm \ref{General} as a special case of Algorithm \ref{MDS}, where the MDS code is optimal, \textit{i.e.} $d=2n-4k+4$.
By Theorem \ref{Theo: General}, our method corrects up to the half-Singleton bound, which exceeds the scope of \cite{banerjee2025decoding}.

\begin{algorithm}[!b]
\caption{Algorithm for Optimal MDS Codes}
\label{General}
\begin{algorithmic}[1]
\Require Received sequence $\mathbf{y}$ of length $\ell$, generator matrix $G\in\mathbb{F}^{k\times n}$
\Ensure Decoded codeword $\mathbf{c}_o$ or \textsc{false}

\If{$\ell\leq2k-2$ or $\ell\geq2n-2k+2$}
    \State \Return \textsc{false}
\EndIf

\For{$t\in\{1,2,\cdots,\ell-2k+3\}$}
    \State $\mathbf{T}\gets(t,t+1,\cdots,t+2k-3)$
    \For{$\{i_1,i_2,\cdots,i_k\}\subseteq T$ and $i_1<i_2<\cdots<i_k$}
        \State $\mathbf{I}\gets(i_1,i_2,\cdots,i_k)$
        \For{$1\leq j_1<j_2<\cdots<j_k\leq n$}
            \State $\mathbf{J}\gets(j_1,j_2,\cdots,j_k)$
            \State Solve $\mathbf{u}\cdot G_{\mathbf{J}}=\mathbf{y}_{\mathbf{I}}$
            \State $\mathbf{c}\gets\mathbf{u}\cdot G$
            \If{$d(\mathbf{c},\mathbf{y})\le n-2k+1$}
                \State $\mathbf{c_o}\gets\mathbf{c}$
                \State \Return $\mathbf{c_o}$
            \EndIf
        \EndFor
    \EndFor
\EndFor

\State \Return \textsc{false}
\end{algorithmic}
\end{algorithm}

\subsection{Decoding MDS Codes for Deletions}
For the deletion-only scenario, the received sequence $\mathbf{y}$ is a subsequence of the transmitted sequence $\mathbf{c_o}$. The algorithm we present below is a straightforward generalisation of the method from our recent paper \cite{fanKZisita2026}.
\begin{algorithm}[!b]
\caption{Algorithm for the Deletion-Only Scenario}
\label{Deletion}

\begin{algorithmic}[1]
\Require Received sequence $\mathbf{y}$ of length $\ell$, generator matrix $G\in\mathbb{F}^{k\times n}$, minimum insdel distance $d$
\Ensure Decoded codeword $\mathbf{c}_o$ or \textsc{false}

\If{$\ell \leq n-\frac{d}{2}$}
    \State \Return \textsc{false}
\EndIf

\State $\mathbf{I} \gets (1,2,\cdots,k)$

\For{all $(j_1,j_2,\cdots,j_k)$ such that $1 \leq j_1 < j_2<\cdots<j_k \leq k+\lfloor \frac{d-1}{2}\rfloor$}
    \State $\mathbf{J} \gets (j_1,j_2,\cdots,j_k)$
    \State Solve $\mathbf{u}\cdot G_{\mathbf{J}}=\mathbf{y}_{\mathbf{I}}$
    \State $\mathbf{c} \gets \mathbf{u}\cdot G$

    \If{$\mathbf{y}$ is a subsequence of $\mathbf{c}$}
        \State $\mathbf{c}_o \gets \mathbf{c}$
        \State \Return $\mathbf{c}_o$
    \EndIf
\EndFor

\State \Return \textsc{false}
\end{algorithmic}
\end{algorithm}

\begin{Theorem}
\label{Theo: Deletion}
    Let $\mathcal{C}$ be an $[n,k]$ MDS code with minimum insdel distance $d$. Algorithm \ref{Deletion} correctly decodes up to $\lfloor \frac{d-1}{2}\rfloor$ deletions in  $\mathcal{O}(n^{k+1})$ time. If the number of deletions exceeds $\lfloor \frac{d-1}{2}\rfloor$, then Algorithm \ref{Deletion} returns FALSE.
\end{Theorem}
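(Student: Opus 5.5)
The plan is to show three things: the true codeword $\mathbf{c_o}$ is generated by some $\mathbf{J}$ in the loop; any $\mathbf{c}$ that passes the subsequence test equals $\mathbf{c_o}$; and the loop runs within $\mathcal{O}(n^{k+1})$ time.

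\emph{Setup and the existence of a good $\mathbf{J}$.} Let $\mathbf{c_o}$ be the transmitted codeword and $s=n-\ell$ the number of deletions. Suppose first that $s\le\lfloor\frac{d-1}{2}\rfloor$. Then $\ell=n-s>n-\frac{d}{2}$, so the length check does not reject $\mathbf{y}$. We also need $\ell\ge k$, so that $\mathbf{y}_{\mathbf{I}}$ with $\mathbf{I}=(1,\dots,k)$ is defined. This follows from Theorem \ref{H-S}: $\frac{d}{2}\le n-2k+2$, hence $\ell>2k-2\ge k-1$. Because $\mathbf{y}$ is a subsequence of $\mathbf{c_o}$, there is a strictly increasing embedding $\phi$ with $y_i=(c_o)_{\phi(i)}$. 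Take the first $k$ positions $j_m=\phi(m)$ for $m=1,\dots,k$. Before position $j_k$ of $\mathbf{c_o}$, exactly $k-1$ symbols survive, so the symbols deleted before it number $j_k-k\le s$. Hence $j_k\le k+s\le k+\lfloor\frac{d-1}{2}\rfloor$, and this $\mathbf{J}$ appears in the loop. Since the code is MDS, $G_{\mathbf{J}}$ is invertible, so $\mathbf{u}\cdot G_{\mathbf{J}}=\mathbf{y}_{\mathbf{I}}$ has a unique solution. The true message satisfies this equation, so $\mathbf{u}\cdot G=\mathbf{c_o}$, and $\mathbf{y}$ is a subsequence of it. Therefore the algorithm returns some codeword no later than this iteration.

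\emph{Uniqueness.} Suppose the algorithm returns some $\mathbf{c}$ with $\mathbf{y}$ a subsequence of $\mathbf{c}$. By Lemma \ref{LCS}, $d(\mathbf{c},\mathbf{y})=n-\ell=s$, and likewise $d(\mathbf{c_o},\mathbf{y})=s$. The triangle inequality gives $d(\mathbf{c},\mathbf{c_o})\le 2s\le d-1<d$, so $\mathbf{c}=\mathbf{c_o}$. This is the same contradiction argument as in Theorem \ref{Theo: Brute}.

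\emph{The FALSE claim.} Now suppose $s>\lfloor\frac{d-1}{2}\rfloor$. Since $s$ is an integer, $s\ge\frac{d}{2}$, so $\ell=n-s\le n-\frac{d}{2}$ and the first step already returns FALSE. This is the step I expect to need the most care: one must check that the integer rounding makes the threshold test $\ell\le n-\frac{d}{2}$ exactly equivalent to $s>\lfloor\frac{d-1}{2}\rfloor$, for both parities of $d$. It is also the reason the algorithm never returns a wrong codeword when there are too many deletions.

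\emph{Complexity.} Since $k$ is fixed, the number of tuples $\mathbf{J}$ is $\binom{k+\lfloor (d-1)/2\rfloor}{k}=\mathcal{O}(n^k)$. Solving the $k\times k$ system takes $\mathcal{O}(1)$ time and encoding takes $\mathcal{O}(n)$. The subsequence test takes $\mathcal{O}(n)$ time with a greedy two-pointer scan, with no need for the $\mathcal{O}(n^2)$ LCS computation. The total is $\mathcal{O}(n^{k+1})$.
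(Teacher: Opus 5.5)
Your proposal is correct and matches the paper's proof: the same $\mathbf{J}$ taken from the positions of $\mathbf{y}(1:k)$ in $\mathbf{c_o}$ with $j_k\le k+\lfloor\frac{d-1}{2}\rfloor$, and the same $\mathcal{O}(n^{k+1})$ count. The uniqueness argument is also the same in substance: you pass through the triangle inequality with $d(\mathbf{c},\mathbf{y})=d(\mathbf{c_o},\mathbf{y})=s$, while the paper reads $d(\mathbf{c},\mathbf{c_o})\le 2n-2\ell<d$ directly off Lemma~\ref{LCS} because $\mathbf{y}$ is a common subsequence. Your treatment of the FALSE claim via the length test in step 1 is tighter than the paper's closing remark, which only argues that a failed loop means the hypotheses were violated, i.e.\ the converse direction.
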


\begin{proof}
    Let $\mathbf{c_o}$ be the true codeword. In a
    deletion-only scenario, $\mathbf{y}(1:k)$ must be a subsequence of $\mathbf{c_o}$, say $\mathbf{c_o}(j_1,j_2,\cdots,j_k)$. As we assume that the number of deletions is at most $\lfloor \frac{d-1}{2}\rfloor$, we have that $j_k\leq k+\lfloor \frac{d-1}{2}\rfloor$. Therefore, $j_1, j_2,\cdots, j_k$ satisfy $1 \leq j_1 < j_2<\cdots<j_k \leq k+\lfloor \frac{d-1}{2}\rfloor$. Let $\mathbf{I}=(1,2,\cdots,k)$ and $\mathbf{J}=(j_1,j_2,\cdots,j_k)$. Denote the unique solution of the linear equation $\mathbf{u}\cdot\mathbf{G_J}=\mathbf{y_I}$ by $\mathbf{u}$. Suppose that $\mathbf{y}$ is a subsequence of $\mathbf{c}=\mathbf{u}\cdot G$. We now show that $\mathbf{c}$ must be equal to $\mathbf{c_o}$. Since $\mathbf{y}$ is a common subsequence of $\mathbf{c}$ and $\mathbf{c_o}$, it follows from Lemma \ref{LCS} that $d(\mathbf{c},\mathbf{c_o})\leq2n-2\ell<2n-2(n-\frac{d}{2})=d$, which implies that $\mathbf{c}=\mathbf{c_o}$.
    The time complexity of Algorithm \ref{Deletion} is calculated as follows. For each $(j_1,j_2,\cdots,j_k)$, the encoding process in step 8 takes $\mathcal{O}(n)$ time. To determine if $\mathbf{y}$ is a subsequence of $\mathbf{c}$, we utilize a standard greedy two-pointer approach, which runs in linear time $\mathcal{O}(n)$\cite{laaksonen2020guide}. Since there are $\binom{k+\lfloor \frac{d-1}{2}\rfloor}{k}$ combinations of $(j_1,j_2,\cdots,j_k)$, the total time complexity is $\mathcal{O}(n^k)\cdot\big(\mathcal{O}(n)+\mathcal{O}(n)\big)=\mathcal{O}(n^{k+1})$. If all possible combinations of $(j_1,j_2,\cdots,j_k)$ fail, then either the number of insertions is not zero or the number of deletions exceeds $\lfloor \frac{d-1}{2}\rfloor$. The algorithm returns FALSE.
\end{proof}

\subsection{Decoding MDS Codes for Insertions}
For the insertion-only scenario, the transmitted sequence $\mathbf{c_o}$ is a subsequence of the received sequence $\mathbf{y}$. The algorithm we present below is a straightforward generalisation of the method from our recent paper \cite{fanKZisita2026}.

\begin{algorithm}[!b]
\caption{Algorithm for the Insertion-Only Scenario}
\label{Insertion}

\begin{algorithmic}[1]
\Require Received sequence $\mathbf{y}$ of length $\ell$, generator matrix $G\in\mathbb{F}^{k\times n}$, minimum insdel distance $d$
\Ensure Decoded codeword $\mathbf{c}_o$ or \textsc{false}

\If{$\ell \geq n+\frac{d}{2}$}
    \State \Return \textsc{false}
\EndIf

\State $\mathbf{J} \gets (1,2,\cdots,k)$

\For{all $(i_1,i_2,\cdots,i_k)$ such that $1 \leq i_1 < i_2 <\cdots<i_k\leq k+\lfloor \frac{d-1}{2}\rfloor$}
    \State $\mathbf{I} \gets (i_1,i_2,\cdots,i_k)$
    \State Solve $\mathbf{u}\cdot G_{\mathbf{J}}=\mathbf{y}_{\mathbf{I}}$
    \State $\mathbf{c} \gets \mathbf{u}\cdot G$

    \If{$\mathbf{c}$ is a subsequence of $\mathbf{y}$}
        \State $\mathbf{c}_o \gets \mathbf{c}$
        \State \Return $\mathbf{c}_o$
    \EndIf
\EndFor

\State \Return \textsc{false}
\end{algorithmic}
\end{algorithm}

\begin{Theorem}
\label{Theo: Insertion}
    Let $\mathcal{C}$ be an $[n,k]$ MDS code with minimum insdel distance $d$. 
    Algorithm \ref{Insertion} correctly decodes up to $\lfloor \frac{d-1}{2}\rfloor$ insertions in  $\mathcal{O}(n^{k+1})$ time. If the number of insertions exceeds $\lfloor \frac{d-1}{2}\rfloor$, then Algorithm \ref{Insertion} returns FALSE.
\end{Theorem}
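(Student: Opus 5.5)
The proof mirrors that of Theorem \ref{Theo: Deletion}, with the roles of $\mathbf{y}$ and $\mathbf{c_o}$ exchanged. Let $\mathbf{c_o}$ be the true codeword and suppose there are $r\leq\lfloor\frac{d-1}{2}\rfloor$ insertions and no deletions. Then $\ell=n+r<n+\frac{d}{2}$, so step 1 does not return \textsc{false}.

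\textbf{Existence of a good index vector.} Since $\mathbf{c_o}$ is a subsequence of $\mathbf{y}$, the first $k$ symbols $\mathbf{c_o}(1:k)$ appear in $\mathbf{y}$ at some positions $i_1<i_2<\cdots<i_k$. Only $r$ symbols were inserted, so at most $r$ inserted symbols precede position $i_k$. Hence $i_k\leq k+r\leq k+\lfloor\frac{d-1}{2}\rfloor$, and $\mathbf{I}=(i_1,\cdots,i_k)$ is among the index vectors enumerated in step 5. For this $\mathbf{I}$ we have $\mathbf{y_I}=\mathbf{c_o}(1:k)$. With $\mathbf{J}=(1,\cdots,k)$, the matrix $G_{\mathbf{J}}$ is invertible because the code is MDS. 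So the unique solution $\mathbf{u}$ of $\mathbf{u}\cdot G_{\mathbf{J}}=\mathbf{y_I}$ is the message of $\mathbf{c_o}$, and $\mathbf{c}=\mathbf{u}\cdot G=\mathbf{c_o}$, which is a subsequence of $\mathbf{y}$. Thus the algorithm returns something whenever $\mathbf{I}$ is reached.

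\textbf{Uniqueness of the output.} We must show that any returned $\mathbf{c}$, possibly found at an earlier $\mathbf{I}$, equals $\mathbf{c_o}$. Suppose $\mathbf{c}$ is a subsequence of $\mathbf{y}$. Both $\mathbf{c}$ and $\mathbf{c_o}$ have length $n$ and are subsequences of $\mathbf{y}$, which has length $\ell<n+\frac{d}{2}$. Matching positions of $\mathbf{y}$ shows they share at least $2n-\ell$ positions of $\mathbf{y}$, so $LCS(\mathbf{c},\mathbf{c_o})\geq 2n-\ell$. By Lemma \ref{LCS}, $d(\mathbf{c},\mathbf{c_o})\leq 2n-2(2n-\ell)=2\ell-2n<d$. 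Therefore $\mathbf{c}=\mathbf{c_o}$. Equivalently, $d(\mathbf{c},\mathbf{y})=\ell-n$ and $d(\mathbf{c_o},\mathbf{y})=\ell-n$, and the triangle inequality gives the same bound.

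\textbf{The \textsc{false} clause.} If the number of insertions exceeds $\lfloor\frac{d-1}{2}\rfloor$, then $\ell=n+r\geq n+\frac{d}{2}$. Step 1 returns \textsc{false}, as the statement requires.

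\textbf{Complexity.} There are $\binom{k+\lfloor\frac{d-1}{2}\rfloor}{k}=\mathcal{O}(n^k)$ choices of $\mathbf{I}$. For each, solving with the fixed $G_{\mathbf{J}}$ (invert once) and encoding costs $\mathcal{O}(n)$, and the greedy two-pointer subsequence test costs $\mathcal{O}(\ell)=\mathcal{O}(n)$. The total is $\mathcal{O}(n^{k+1})$.

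The main obstacle is the uniqueness step: we need the bound $LCS\geq 2n-\ell$ for two length-$n$ subsequences of a common length-$\ell$ sequence. This follows because the two embeddings in $\mathbf{y}$ have at least $2n-\ell$ positions in common by pigeonhole.
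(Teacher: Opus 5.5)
Your proof is correct and follows the paper's argument. You locate $\mathbf{c_o}(1:k)$ in $\mathbf{y}$ at positions with $i_k\leq k+\lfloor\frac{d-1}{2}\rfloor$, show that any returned $\mathbf{c}$ equals $\mathbf{c_o}$ via $d(\mathbf{c},\mathbf{c_o})\leq 2\ell-2n<d$, and count $\mathcal{O}(n^k)$ candidates at $\mathcal{O}(n)$ cost each. Your treatment of the \textsc{false} clause through the length test in step 1 is more direct than the paper's, which only argues the converse implication (that exhausting all candidates means there were deletions or too many insertions).
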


\begin{proof}
    Let $\mathbf{c_o}$ be the true codeword. In an insertion-only scenario, $\mathbf{c_o}(1:k)$ must be a subsequence of $\mathbf{y}$, say $\mathbf{y}(i_1,i_2,\cdots,i_k)$.  As we assume that the number of insertions is at most $\lfloor \frac{d-1}{2}\rfloor$, we have that $i_k\leq k+\lfloor \frac{d-1}{2}\rfloor$. Therefore, $i_1,i_2,\cdots,i_k$ satisfy $1 \leq i_1 < i_2 <\cdots<i_k\leq k+\lfloor \frac{d-1}{2}\rfloor$. Let $\mathbf{J}=(1,2,\cdots,k)$ and $\mathbf{I}=(i_1,i_2,\cdots,i_k)$. Denote the unique solution of the linear equation $\mathbf{u}\cdot\mathbf{G_J}=\mathbf{y_I}$ by $\mathbf{u}$.  Suppose $\mathbf{c}=\mathbf{u}\cdot G$ is a subsequence of $\mathbf{y}$. We now show that $\mathbf{c}$ must be equal to $\mathbf{c_o}$. Since both $\mathbf{c}$ and $\mathbf{c_o}$ are subsequences of $\mathbf{y}$, we have $d(\mathbf{c},\mathbf{y})=d(\mathbf{c_o},\mathbf{y})=\ell-n$. Therefore, by the triangle inequality, $d(\mathbf{c},\mathbf{c_o})\leq d(\mathbf{c},\mathbf{y})+d(\mathbf{c_o},\mathbf{y})\leq 2\ell-2n< 2n+d-2n=d$, which implies that $\mathbf{c}= \mathbf{c_o}$. The time complexity of Algorithm \ref{Insertion} is calculated as follows. For each $(i_1,i_2,\cdots,i_k)$, the encoding process in step 8 takes $\mathcal{O}(n)$ time. To determine if $\mathbf{c}$ is a subsequence of $\mathbf{y}$, we utilize a standard greedy two-pointer approach, which runs in linear time $\mathcal{O}(n)$\cite{laaksonen2020guide}. Since there are $\binom{k+\lfloor \frac{d-1}{2}\rfloor}{k}$ combinations of $(i_1,i_2,\cdots,i_k)$, the total time complexity is $\mathcal{O}(n^k)\cdot\big(\mathcal{O}(n)+\mathcal{O}(n)\big)=\mathcal{O}(n^{k+1})$. If all possible combinations of $(i_1,i_2,\cdots i_k)$ fail, then either the number of deletions is not zero or the number of insertions exceeds $\lfloor \frac{d-1}{2}\rfloor$. The algorithm returns FALSE.
\end{proof}

\section{Conclusions}
In this paper, we focus on efficient insdel decoding algorithms for $[n,k]$ MDS codes. Our main result is a bounded distance decoding algorithm for decoding insdel errors in general with a time complexity of $\mathcal{O}(n^{k+3})$. It improves on the brute-force method with a time complexity of $\mathcal{O}(n^{2k+2})$. Additionally, we presented two algorithms with a time complexity of $\mathcal{O}(n^{k+1})$ for the simpler deletion-only and insertion-only scenarios, respectively. Our methods utilise only the MDS property and do not require the codes to be optimal in the insdel setting. A possible avenue for future research is to develop more efficient insdel bounded distance decoding algorithms for $[n,k]$ Reed-Solomon codes exploiting their special algebraic properties.

\bibliographystyle{plain}
\bibliography{references}

\section* {AI Statement}
Artificial intelligence tools have not been used to produce any parts of this manuscript, including but not limited to text and code. The planning, analysis, and writing of this research were executed entirely by the human authors.

\end{document}